\newtheorem{Theorem}{Theorem} 
\newtheorem{Lemma}{Lemma} 
\newtheorem{Proposition}{Proposition} 
\newtheorem{Remark}{Remark} 
\newcommand{\mc}[1]{\mathcal{#1}}
\title{\LARGE \bf
Distributed Filter Design for Cooperative $\mathcal{H}_\infty$-type Estimation%
\thanks{This work was supported by the German
Research Foundation (DFG) through
the Cluster of Excellence in Simulation Technology (EXC 310/1) at the
University of Stuttgart and the Australian Research Council under Discovery
Projects funding scheme (Project DP120102152).}}
\author{
Jingbo Wu \and Li Li \and Valery Ugrinovskii \and Frank Allg\"ower
\thanks{ 
J.~Wu and F.~Allg\"ower are with the Institute for Systems Theory and
Automatic Control, University of Stuttgart, Stuttgart, Germany. Email:
{\tt\footnotesize \{jingbo.wu, allgower\}@ist.uni-stuttgart.de}. 
L. Li is with School of Electrical, Mechanical and Mechatronic Systems,
University of Technology, Sydney,  Australia. Email: {\tt\footnotesize
  Li.Li@uts.edu.au}; V.~Ugrinovskii is with the School of Information Technology and Electrical
Engineering, University of New South Wales at the 
Australian Defence Force Academy,
Canberra, Australia, Email: {\tt\footnotesize v.ougrinovski@adfa.edu.au}.
Part of this work was done
during L. Li's visit to the School of Information Technology and Electrical
Engineering, 
University of New South Wales at the Australian Defence Force Academy.}}
\begin{document}

\maketitle

\begin{abstract}
In this paper, we consider the distributed robust filtering problem, where estimator design is based on a set of coupled linear matrix inequalities (LMIs). We separate the problem and show that the method of multipliers can be applied to obtain a solution efficiently and in a decentralized fashion, i.e. all local estimators can compute their filter gains locally, with communications restricted to their neighbors. 
\end{abstract}


\section{Introduction}
Estimator design has been an essential part of controller design ever since
the development of state-space based controllers. A milestone was laid by
the Kalman Filter in 1960 \cite{Kalman1960}. 

While in the classical estimator design one estimator is used for the entire
system, distributed estimators have gained attention since a
distributed Kalman Filter was presented in \cite{OlfatiSaber2007,Carli2008}. In a distributed estimator setup, multiple estimators create an estimate of the system's state, either individually \cite{Ding2012} or cooperatively. In the latter case, even when every single estimator may be
able to obtain an estimate of the state on its own, cooperation reduces the
effects of model and measurement disturbances
\cite{SS-2009}. Also, the situations are not uncommon where individual
estimators are unable to obtain an estimate of the state on their own and
cooperation becomes an essential prerequisite \cite{Ugrinovskii2011a,Ugrinovskii2011}. 
The node estimators may even not have a model 
of the full system, but only know a part of the system \cite{Stankovic2009}.

However, even though the setup consists of distributed estimation units
without a central coordinator, in many known approaches the design process
itself requires a central coordination unit. In some practical application
examples, where the design process can
be done offline, this may not be a significant drawback. On the other hand,
in many applications especially those involving distributed sensor networks
with varying communication topology, a centralized computation of
observer parameters represents a severe limitation. Practicality of a
distributed system demands that the estimator design process is to
be carried out in a distributed manner as well. 
If network needs to adapt to some changes, such as a change in the plant or change in the network structure,
this allows each node to reconfigure using only local communications and
computation only. 

In this paper, we provide a complete analysis of one distributed estimation
problem where such a distributed design scheme is possible. Specifically, 
we adopt the setup from \cite{Ugrinovskii2011a} concerned with the problem of
distributed estimation with $\mathcal{H}_{\infty}$ consensus performance. As a matter
of fact, in \cite{Ugrinovskii2011a} a gradient-descent-type algorithm was
proposed that can be used to calculate the filter gains in a distributed
manner. Although the proposed gradient type algorithm demonstrated a
possibility of computing the estimator parameters in principle, a practical
application of that algorithm is hindered due to slow convergence observed
even in low dimensional examples. Also, implementation of the decentralized
design scheme proposed in  \cite{Ugrinovskii2011a} requires bidirectional
communications between the network nodes, which essentially requires the
communication graph to be undirected for the purpose of the estimator
design. In this paper, we address the problem of designing distributed
estimators by using distributed optimization methods presented in
\cite{Bertsekas1989},\cite{Boyd2010a}. Distributed optimization methods are widely applied in networked systems, see e.g. \cite{Raffard2004},\cite{Rabbat2004},\cite{Necoara2011}. The contribution of this paper is to show that the problem of designing distributed estimators is amenable to the methodology of distributed
optimization as well. Although the design scheme is
proposed for the specific class of algorithms in  \cite{Ugrinovskii2011a},
it illustrates all the steps necessary to devise similar design schemes for
other distributed estimation algorithms and distributed optimization subject to LMI-constraints in general.   

The rest of the paper is organized as follows: We first introduce the notation and some preliminaries on graph theory. Then, we revisit some essential results published in \cite{Ugrinovskii2011a} and discuss there limitations with respect to numerical optimization. Section III is dedicated for introducing the proposed optimization scheme. In Section IV, we give a mathematical example, and Section V concludes the paper.

\section{Preliminaries and Background}

In this section, we introduce the basic definitions and results which our main results will build on.
\subsection{Notation}
Let $P$ be a symmetric matrix. If $P$ is positive definite, it is denoted
$P > 0$, and we write $P < 0$, if $P$ is negative definite. 
$0$ denotes a matrix of suitable dimension, with all entries equal $0$. 
Moreover, for
vectors $x \in \mathcal{R}^n$ we use the Euclidean vector norm $\|x\|=
\sqrt{x^\top x}$ and the weighted vector norm $\|x\|_P= \sqrt{x^\top P x}$
for symmetric matrices $P > 0$. For matrices $A \in \mathcal{R}^{n \times
 m}$, we use the Frobenius norm $\|A\| = \sqrt{tr(A^\top A)}$ and the
induced norm  $\|A\|_2 = \sup_{\|x\|\neq 0}(\|Ax\|/\|x\|)$. 
$\mathcal{L}_2$ denotes the Lebesgue space of $\mathbb{R}^n-$valued
vector-functions $z(\cdot)$, defined on the time interval $[0, \infty)$
with the norm $\|z\|_2 = \sqrt{\int_0^\infty \|z(t)\|^2 dt}$.  The
vectorization $vec(\cdot)$ maps any matrix $A \in \mathcal{R}^{n \times
 m}$ to
the $n \cdot m$-dimensional vector $vec(A)$ formed by the stacked
columns of $A$. 

\subsection{Communication graphs}

In this section we summarize some notation from the graph theory.
We use directed, unweighted graphs $\mc{G} = (\mc{V} , \mc{E})$ to describe the communication topology between the individual agents. 
$\mc{V} = \{v_1,...,v_N\}$ is the set of vertices, where $v_k \in \mc{V}$ represents the $k$-th agent. 
$\mc{E} \subseteq \mc{V} \times \mc{V}$ are the sets of edges, which model the information flow, i.e. the $k$-th agent receives information from agent $j$ if and only if $(v_j,v_k) \in \mc{E}$.
The set of vertices that agent $k$ receives information from is called the neighborhood of agent $k$, which is denoted by $\mathcal{N}_k=\{j: (v_j,v_k) \in \mc{E}\}$. The set of vertices that receive information from agent $k$ is called the out-neighborhood of agent $k$, which is denoted by $\mathcal{M}_k=\{j: (v_k,v_i) \in \mc{E}\}$. The in-degree $p_k$ and out-degree $q_k$ of a vertex $k$ is defined as the number of edges in $\mc{E}$, which have $v_k$ as their head and tail, respectively. 

\subsection{Distributed $\mathcal{H}_{\infty}$ estimation and LMI conditions}
We now present the underlying distributed estimation problem from ~\cite{Ugrinovskii2011a}. It involves
LMI design conditions, which are the main object of interest in the paper. 
Our main
objective is to show that these LMI design conditions are amendable to a distributed solution by using the multiplier method (c.f.~\cite{Bertsekas1989}) . 
 
The distributed estimation problem with $\mathcal{H}_{\infty}$ consensus of estimates
posed in~\cite{Ugrinovskii2011a} involves estimation of the state of the
uncertain LTI system described by the differential equation
\begin{equation}\label{eq:LTI}
\begin{aligned}
\dot{x} &= A x + B \xi(t),
\end{aligned}
\end{equation}
where $x \in \mathbb{R}^{n}$ is the system state variable to be estimated and 
$\xi(t) \in \mathcal{L}_2$ is a disturbance
function. 
$N$ estimators are applied, each estimator receives a $r_k$-dimensional measurement 
\begin{equation}\label{sys:indiv_output}
y_k=C_k x + D_k \xi + \overline{D}_k \eta_k(t).
\end{equation}
In \eqref{sys:indiv_output}, $\eta_k(t) \in \mathcal{L}_2$ represents the measurement uncertainty of the local estimator $k$. In particular, it is assumed that $E_k = D_k  D_k^\top + \overline{D}_k \overline{D}_k^\top > 0$.

\medskip

\begin{Remark}
The assumption that $E_k > 0$ is a standard technical assumption made in
nonsingular $\mathcal{H}_{\infty}$ control problems \cite{Basar-Bernhard}. It is obviously satisfied
in the case when all measurements are affected by disturbances, which is
evidently satisfied in practical applications. This assumption is later
used to guarantee boundedness of the solution set. 
\end{Remark}
\medskip

The estimators 
form a network of
interconnected $\mathcal{H}_{\infty}$ filters of the form
\begin{equation}\label{sys:estimator}
\begin{aligned}
\dot{\hat{x}}_k =& A \hat{x}_k + L_k(y_k-C_k \hat{x}_k) +K_k \sum_{j \in \mathcal{N}_k} (\hat{x}_j- \hat{x}_k)
\end{aligned}
\end{equation}
with initial condition $\hat{x}_k(0) = 0$. Here the matrices $L_k \in
\mathbb{R}^{n \times r_k}$ and $K_k \in \mathbb{R}^{n \times n}$ 
are the filter gains to be designed. 

As it can be seen in \eqref{sys:estimator}, the estimators are distributed,
i.e. the local estimators create an estimation of the system's state $x$,
solely based on the local output $y_k$ and communication with neighbouring
estimators. The problem in ~\cite{Ugrinovskii2011a} was to determine
estimator gains $L_k$, $K_k$ in 
\eqref{sys:estimator} to satisfy natural internal stability and
$\mathcal{H}_{\infty}$ gain conditions. To introduce these conditions, define
the local estimator error as
$e_k= x -\hat{x}_k$, and the estimator disagreement function is defined as
\begin{equation}
\Psi(\hat{x})=\frac{1}{N} \sum_{k=1}^N \sum_{j \in \mathcal{N}_k} \| \hat{x}_j - \hat{x}_k\|^2,
\end{equation}
where $\hat{x} = [\hat{x}_1^\top , ..., \hat{x}_N^\top]^\top$ and $e =
[e_1^\top , ..., e_N^\top]^\top$. 
The estimator design problem 
is concerned with achieving the following properties: 

\begin{compactenum}[(i)]
\item In the absence of model and measurement disturbances (i.e., when
  $\xi$, $\eta_k=0$), the estimation errors decay so that $e_k \to 0$
  asymptotically for all $k=1,...,N$. 

\item The estimators \eqref{sys:estimator} provide guaranteed
  $\mathcal{H}_{\infty}$ performance
\begin{equation}\label{eq:hinf-performance1}
\begin{aligned}
&\sup_{x_0, (\xi, \eta_k) \neq 0} \frac{\int_0^\infty \Psi(\hat{x}(t))dt}{\|x_0\|_P^2 + (1/N)\sum_{k=1}^N \| \eta_k \|_2^2 + \|\xi \|_2^2} \leq \gamma\\
&\frac{1}{N}\sum_{k=1}^N \|e_k \|_2^2 \leq \overline\gamma \left( \|x_0\|^2_P + \frac{1}{N} \sum_{k=1}^N \| \eta_k \|^2_2 + \| \xi \|^2_2 \right),
\end{aligned}
\end{equation}
for some positive definite matrix $P$, some $\overline\gamma > 0$, and performance index $\gamma > 0$.
\end{compactenum}

\medskip

Property (ii) requires both the local estimation errors and the estimator
disagreement to be bounded with respect to the disturbances in an
$\mathcal{H}_\infty-$sense. As shown in \cite{Ugrinovskii2011a},
LMI-conditions can be found, where the solution delivers estimator gains
sufficient for solving the above problem.  To present these LMI conditions, define the matrices
\begin{align*}
\widetilde{A}_k =& A + \alpha_k I - B D_k^\top E_k^{-1} C_k, \\
Q_k =& X_k \widetilde{A}_k + \widetilde{A}_k^\top X_k - C_k^\top E_k^{-1} C_k + \beta(p_k+q_k)I, \\
\widetilde{B}_k =& [B(I-D_k^\top E_k^{-1} D_k) \quad -BD_k^\top E_k^{-1} \overline{D_k}],
\end{align*}
where $X_k \in \mathbb{R}^{n \times n}$ is a symmetric,
positive definite matrix and $\alpha_k, \beta$ are positive 
parameters. 
For the remainder of this paper, we will make two assumptions on the system class.

\noindent\textbf{Assumption 1}
The communication graph $\mathcal{G}$ is connected and balanced, i.e. $q_k = p_k$ for all $k=1,...,N$.

\noindent\textbf{Assumption 2}
For all $k=1,...,N$, the tuple $(\widetilde{A}_k, \widetilde{B}_k)$ is controllable.

\medskip

The LMIs used for designing the estimator gains are proposed as
\begin{align}
\begin{bmatrix}
Q_k-p_k F_k - p_k F_k^\top  &  X_k \widetilde{B}_k \\
*  & -I
\end{bmatrix} & < 0 \label{LMI:stability} \\
\begin{bmatrix}
-\frac{2 \alpha_k}{q_k +1} X_k & -\beta I + F_k & \hdots & -\beta I + F_k \\
* & -\frac{2 \alpha_{j_1^k}}{q_{j_1^k} +1} X_{j_1^k} & \hdots & 0 \\
\vdots & \vdots & \ddots & \vdots \\
* & 0 & \hdots & -\frac{2 \alpha_{j_{p_k}^k}}{q_{j_{p_k}^k} +1} X_{j_{p_k}^k}
\end{bmatrix} & < 0 \label{LMI:couplings}, \\
\begin{bmatrix}\label{LMI:rest_feasible_set}
- \rho X_k & -F_k^\top \\
-F_k & -X_k
\end{bmatrix} & < 0,
\end{align}
where $X_k, F_k$, and $\beta$ are the variables, $\rho>0$ is a constant
parameter, and $\mathcal{N}_k=\{j_1^k,...,j_{p_k}^k\}$. We can now formulate a variation of the main result from
\cite{Ugrinovskii2011a}. 

\medskip

\begin{Proposition}
Suppose the interconnection graph $\mathcal{G}$ and the parameters
$\alpha_k > 0,~ k = 1, . . . , N$, are such that the set
\begin{equation}
\Gamma = \{\beta > 0: \text{\eqref{LMI:stability}-\eqref{LMI:rest_feasible_set}  are feasible for } k=1,...,N\}
\end{equation}
is not empty. For any $\beta \in \Gamma$, one solution to the distributed
estimation problem under consideration, with $\gamma = \frac{1}{\beta}$, is given by the
network of estimators \eqref{sys:estimator} in which
\begin{equation}
\begin{aligned} \label{eq:filter_gains}
K_k &= X_k^{-1} F_k \quad \text{and} \quad 
L_k &= (X_k^{-1} C^\top_k + BD^\top_k)E_k^{-1},
\end{aligned}
\end{equation}
where $X_k$ and $F_k, k = 1, . . . , N$, belong to the feasibility set
of \eqref{LMI:stability} - \eqref{LMI:rest_feasible_set}, corresponding to this particular value of $\beta$. The
weighting matrix $P$ in \eqref{eq:hinf-performance1} is given by $P = (1/N) \sum_{k=1}^N X_k$.
\end{Proposition}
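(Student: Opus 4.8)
The plan is to reproduce, with the change of variables $F_k = X_k K_k$ that linearises the consensus gains, the dissipativity argument underlying the main result of \cite{Ugrinovskii2011a}. First I would write the error dynamics. With $e_k = x - \hat x_k$, using $\hat x_j - \hat x_k = e_k - e_j$ and $y_k - C_k\hat x_k = C_ke_k + D_k\xi + \overline D_k\eta_k$, \eqref{sys:estimator} gives
\[
\dot e_k = (A - L_kC_k)e_k + (B-L_kD_k)\xi - L_k\overline D_k\eta_k - K_k\sum_{j\in\mathcal{N}_k}(e_k - e_j).
\]
Substituting the gains \eqref{eq:filter_gains} and using $E_k = D_kD_k^\top + \overline D_k\overline D_k^\top$, I would record the two algebraic identities that produce the LMI data: $X_k(A-L_kC_k) + (A-L_kC_k)^\top X_k = X_k\widetilde A_k + \widetilde A_k^\top X_k - 2\alpha_kX_k - 2C_k^\top E_k^{-1}C_k$, and that the disturbance channel premultiplied by $X_k$ equals $X_k\widetilde B_k - C_k^\top E_k^{-1}[D_k~\overline D_k]$, with the orthogonality $\widetilde B_k[D_k~\overline D_k]^\top = 0$.

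I would take the storage function $V(e) = \tfrac1N\sum_{k=1}^N e_k^\top X_k e_k$; since $\hat x_k(0)=0$ we have $e_k(0)=x_0$, so $V(e(0)) = x_0^\top(\tfrac1N\sum_k X_k)x_0 = \|x_0\|_P^2$ with $P$ as claimed. The core is the dissipation inequality
\[
\dot V + \beta\,\Psi(\hat x) - \|\xi\|^2 - \tfrac1N\textstyle\sum_k\|\eta_k\|^2 \le 0 .
\]
Distributing the penalty as $\|\xi\|^2 = \tfrac1N\sum_k\|\xi\|^2$, each node carries a local penalty $\tfrac1N\|w_k\|^2$ with $w_k = [\xi^\top~\eta_k^\top]^\top$. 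Completing the square in $w_k$ — this is where $E_k>0$ is indispensable, and where the above orthogonality makes the cross terms vanish and turns the quadratic-in-$w_k$ remainder into exactly $X_k\widetilde B_k\widetilde B_k^\top X_k + C_k^\top E_k^{-1}C_k$ — bounds the disturbance contribution by the Schur complement appearing in \eqref{LMI:stability}.

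The main obstacle is the bookkeeping that reassembles $\dot V + \beta\Psi - \mathrm{penalty}$ into the two LMIs. Collecting the $e_k$-diagonal terms from the dynamics, the self part $-p_kF_k - p_kF_k^\top$ of the consensus term, the $\beta(p_k+q_k)I$ diagonal of $\beta\Psi$, and the worst-case disturbance bound, the four $Q_k$-terms combine so that the diagonal becomes $Q_k - p_kF_k - p_kF_k^\top + X_k\widetilde B_k\widetilde B_k^\top X_k - 2\alpha_kX_k$, whose first three blocks are negative by \eqref{LMI:stability}. The remaining cross terms between $e_k$ and its neighbours $e_j$, $j\in\mathcal{N}_k$, equal $2e_k^\top(-\beta I + F_k)e_j$, matching the off-diagonal blocks of \eqref{LMI:couplings}; the leftover $-2\alpha_kX_k$ must be split among the coupling LMIs. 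Here the weight $\tfrac{2\alpha_k}{q_k+1}$ is engineered precisely so that, since $X_k$ appears once as a head block in the $k$-th coupling LMI and $q_k$ times as a tail block in the coupling LMIs of its out-neighbours, the shares sum to $2\alpha_kX_k$; Assumption~1 (balanced, $q_k=p_k$) is what makes this node-by-node assembly close. Consequently $\dot V + \beta\Psi - \mathrm{penalty} = \tfrac1N\sum_k \zeta_k^\top(\mathrm{LMI}_k)\zeta_k$ with $\zeta_k = [e_k^\top~e_{j_1^k}^\top\cdots e_{j_{p_k}^k}^\top]^\top$, and strict feasibility of \eqref{LMI:stability}--\eqref{LMI:couplings} yields $\dot V + \beta\Psi \le -\epsilon\|e\|^2$ for some $\epsilon>0$.

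Finally I would read off the conclusions. With $\xi,\eta_k = 0$ the inequality gives $\dot V \le -\epsilon\|e\|^2 < 0$ for $e\neq0$, so the error system is exponentially stable and $e_k\to0$, establishing (i). Integrating the full inequality on $[0,\infty)$ and using $V\ge0$, $V(e(0))=\|x_0\|_P^2$ yields $\beta\int_0^\infty\Psi\,dt \le \|x_0\|_P^2 + \tfrac1N\sum_k\|\eta_k\|_2^2 + \|\xi\|_2^2$, i.e. the first bound of \eqref{eq:hinf-performance1} with $\gamma = 1/\beta$; retaining the $\epsilon\|e\|^2$ margin instead of $\beta\Psi$ gives the second bound with a finite $\overline\gamma$ proportional to $1/\epsilon$. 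The inequality \eqref{LMI:rest_feasible_set}, equivalent by a Schur complement to $K_k^\top X_kK_k < \rho X_k$, together with $E_k>0$ and Assumption~2, serves to keep the feasibility set bounded so that such an $\epsilon$ (hence a finite $\overline\gamma$) exists uniformly, as anticipated in the Remark.
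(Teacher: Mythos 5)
The paper offers no proof of this Proposition --- it is imported (as ``a variation of the main result'') from \cite{Ugrinovskii2011a} --- so there is no in-paper argument to compare against; your dissipativity reconstruction is precisely the argument underlying that reference and it checks out: the identities obtained after substituting \eqref{eq:filter_gains}, the completion of squares using the orthogonality $\widetilde B_k\,[D_k \;\; \overline D_k]^\top = 0$ (which makes the cross terms vanish and produces exactly $X_k\widetilde B_k\widetilde B_k^\top X_k + C_k^\top E_k^{-1}C_k$), and the node-by-node reassembly of $\dot V + \beta\Psi - \mathrm{penalty}$ into \eqref{LMI:stability} and \eqref{LMI:couplings} are all correct. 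One small misattribution: the weights $\tfrac{2\alpha_j}{q_j+1}$ close the assembly for \emph{any} directed graph, since $X_j$ appears once as a head block in its own coupling LMI and $q_j$ times as a tail block in the coupling LMIs of its out-neighbours regardless of balance; Assumption~1 is needed only so that the averages in Algorithm~1 can be computed by average consensus (as stated in the Remark following the Proposition), not for the Proposition itself.
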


\medskip

\begin{Remark}
Assumption 1 is a restriction toward the class of communication graphs, which is made in order to ensure that the well-known average consensus algorithm is applicable. 
Assumption 2 is used to ensure boundedness of the feasible sets. It is not restrictive, as it represents the worst case of disturbance, and if not satisfied, small hypothetical disturbances can be added to the system description, i.e. additional columns to $B, D_k$ and $\overline{D}_k$.
Furthermore, note that the tuple $(\widetilde{A}_k,C_k)$ is not required to be detectable.
\end{Remark}

\medskip

Since the LMIs \eqref{LMI:stability}, \eqref{LMI:couplings}, \eqref{LMI:rest_feasible_set} are coupled, they may be solved in a centralized manner as the optimization problem
\begin{equation}
\begin{aligned}\label{Prob:centr_optimization}
\min & \;\;( -\beta) \\
\text{subject to } & \eqref{LMI:stability},\eqref{LMI:couplings}, \eqref{LMI:rest_feasible_set}, k=1,...,N,
\end{aligned}
\end{equation}
where the resulting matrices $X_k, F_k$ deliver the estimator gains $L_k, K_k$ according to \eqref{eq:filter_gains}.
In the next section we will explore the separation of the problem and parallel computation in order to solve the problem in a distributed manner.

\section{Distributed Calculation of Filter Gains}

Parallel and distributed computation is thoroughly discussed e.g. in
\cite{Bertsekas1989}, and in this section, we use some of the methods
presented in Section 3 in \cite{Bertsekas1989} to calculate our estimator gains in a distributed fashion.
Solving the optimization problem \eqref{Prob:centr_optimization} can be formulated as a separable problem by defining local representations of the solution variables, $X_j^k$, and $\beta^k$ for all $k=1,\ldots,N$ and $j=k, j_1^k,...,j_{p_k}^k$. 
The tuple of local variables is denoted by 
\begin{equation}
Y_k = (F_k, \beta^k, X_k^k, X_{j_1^k}^k,...,X_{j_{p_k}^k}^k ),
\end{equation}
where the upper index $k$ denotes the representation of a variable used by estimator $k$ and all $X_j^k$ are symmetric, positive definite matrices, and $\beta^k \geq 0$.

\medskip

\noindent\textbf{Problem 1:} Find an iterative algorithm, which creates a sequence $Y_k(t), t \in \mathbb{N}$, such that local representations of the variables converge in the sense that
\begin{equation} \label{eq:beta-convergence}
\begin{aligned}
\lim_{t \to \infty} \left( \beta^{k_1}(t) - \beta^{k_2}(t) \right) &= 0, \\
\end{aligned}
\end{equation}
for all $k_1, k_2 = 1,...,N$ and
\begin{equation} \label{eq:matrix-convergence}
\begin{aligned}
\lim_{t \to \infty} \left( X_j^{k_1}(t)-X_j^{k_2}(t) \right) &= 0 , \\
\end{aligned}
\end{equation}
for all $j=1,...,N$ and $k_1, k_2 \in \mathcal{M}_j \cup j$. All iterations $Y_k(t)$ shall satisfy the LMIs \eqref{LMI:stability}-\eqref{LMI:rest_feasible_set} when setting $\beta=\beta^k, X_k=X_k^k, X_{j_1^k}=X_{j_1^k}^k,...,X_{j_{p_k}^k}=X_{j_{p_k}^k}^k$.
Furthermore, the iteration steps of the local variables $Y_k(t+1)$ shall be calculated in a distributed fashion, i.e. interaction with the neighbors $j \in \mathcal{N}_k$ only.

\medskip

As a first step, in order to ensure that both \eqref{Prob:centr_optimization} and Problem 1 are well-posed, we establish a statement about the boundedness of the feasible set of the LMIs \eqref{LMI:stability}-\eqref{LMI:rest_feasible_set}. The proof of this theorem will later be used in order to ensure that solutions of local optimizations are always attainable.

\medskip

\begin{Theorem}
Suppose the pairs $(\widetilde A_k,\widetilde B_k)$ are controllable. Then, for any $\rho>0$, the
feasible set  
\begin{equation}
\begin{aligned}
\Omega = \{ &(\beta, X_k, F_k, k=1,...,N) | \\
&\eqref{LMI:stability}, \eqref{LMI:couplings}, \eqref{LMI:rest_feasible_set} \text{ hold for } k=1,...,N\}
\end{aligned}
\end{equation}
is bounded.

\begin{proof}
Suppose $(\beta, X_k, F_k, k=1,\ldots,N)\in\Omega$. Using the Schur
complement, it follows from
(\ref{LMI:stability}), (\ref{LMI:rest_feasible_set}) that for an arbitrary
$\tau_k>0$, 
\begin{eqnarray}
  \label{Ric.ineq}
  X_k \widetilde{A}_k + \widetilde{A}_k^\top X_k - C_k^\top E_k^{-1} C_k +
  \beta(p_k+q_k)I && \nonumber \\
  -p_k F_k - p_k F_k^\top  +  X_k \widetilde{B}_k \widetilde{B}_k^\top
  X_k && \nonumber \\
  + \tau_k (F_k^\top X_k^{-1} F_k-\rho X_k)&&<0.  
\end{eqnarray}
Completing the squares on the left-hand side yields
\begin{eqnarray}
  \label{Ric.ineq.1}
  X_k \widetilde{A}_k + \widetilde{A}_k^\top X_k - C_k^\top E_k^{-1} C_k +
  \beta(p_k+q_k)I && \nonumber \\
  +\tau_k(F_k - \frac{p_k}{\tau_k}X_k)^\top X_k^{-1}(F_k - \frac{p_k}{\tau_k}X_k)   && \nonumber \\
  -\tau_k(\frac{p_k^2}{\tau_k^2}+\rho) X_k+  X_k \widetilde{B}_k \widetilde{B}_k^\top
  X_k &&<0.  
\end{eqnarray}
Hence, we conclude that $(\beta,X_k)$ satisfy the Riccati inequality
\begin{eqnarray}
  \label{Ric.ineq.2}
  X_k (\widetilde{A}_k-\frac{p_k^2+\tau_k^2\rho}{2\tau_k}I) +
  (\widetilde{A}_k-\frac{p_k^2+\tau_k^2\rho}{2\tau_k}I)^\top X_k  && \nonumber
  \\
- C_k^\top E_k^{-1} C_k +  \beta(p_k+q_k)I
 +  X_k \widetilde{B}_k \widetilde{B}_k^\top  X_k &&<0.  
\end{eqnarray}
After pre- and post-multiplying (\ref{Ric.ineq.2}) by $X_k^{-1}$,
(\ref{Ric.ineq.2}) reduces to
\begin{eqnarray}
  \label{Ric.ineq.3}
  (\widetilde{A}_k-\frac{p_k^2+\tau_k^2\rho}{2\tau_k}I)X_k^{-1} +
  X_k^{-1}(\widetilde{A}_k-\frac{p_k^2+\tau_k^2\rho}{2\tau_k}I)^\top  && \nonumber
  \\
- X_k^{-1}(C_k^\top E_k^{-1} C_k -  \beta(p_k+q_k)I)X_k^{-1}
 +  \widetilde{B}_k \widetilde{B}_k^\top &&<0.  
\end{eqnarray}

Associated with this Riccati inequality, consider the Riccati equation
\begin{eqnarray}
  \label{Ric.eq}
  (\widetilde{A}_k-\frac{p_k^2+\tau_k^2\rho}{2\tau_k}I)Z_k +
  Z_k(\widetilde{A}_k-\frac{p_k^2+\tau_k^2\rho}{2\tau_k}I)^\top  && \nonumber
  \\
- Z_k(C_k^\top E_k^{-1} C_k - \frac{1}{\gamma}(p_k+q_k)I)Z_k
 +  \widetilde{B}_k \widetilde{B}_k^\top &&=0  
\end{eqnarray}
and define
\begin{eqnarray}
  \label{gamma.def}
  \gamma^\circ=\inf\left\{\begin{array}{cl}\gamma>0\colon & \mbox{equation~(\ref{Ric.eq}) has
      a nonnegative-}\\ & \mbox{definite solution}
    \end{array}
  \right\}.
\end{eqnarray}
From the $\mathcal{H}_{\infty}$ control theory~\cite[Theorems 4.8 and
9.7]{Basar-Bernhard}, it is known that 
the set whose infimum determines $\gamma^\circ$ is nonempty if the pair 
$(\widetilde{A}_k-\frac{p_k^2+\tau_k^2\rho}{2\tau_k}I, C_k)$ is detectable and
the pair $(\widetilde{A}_k-\frac{p_k^2+\tau_k^2\rho}{2\tau_k}I, \tilde B_k)$
is stabilizable. Note that by the condition of the theorem, the pair
$(\widetilde{A}_k, \tilde B_k)$ is controllable; this implies the stabilizability 
of $(\widetilde{A}_k-\frac{p_k^2+\tau_k^2\rho}{2\tau_k}I, \tilde B_k)$. Now, let
us choose $\tau_k>0$ such that all unstable unobservable modes of the
matrix pair  $(\widetilde{A}_k,C_k)$ lie in the region
$\mathrm{Re}\;s<\frac{p_k^2+\tau_k^2\rho}{2\tau_k}$. This will guarantee that     
the pair $(\tilde A-\frac{p_k^2+\tau_k^2\rho}{2\tau_k}I,C_k)$ is
detectable. Thus, we conclude that $\gamma^\circ<\infty$.

The feasibility of the Riccati inequality (\ref{Ric.ineq.3}) also
implies that the following state-feedback $\mathcal{H}_{\infty}$ control problem
involving the system 
\begin{eqnarray}
  \label{VU.sys}
  && \dot x=(\widetilde{A}_k-\frac{p_k^2+\tau_k^2\rho}{2\tau_k}I)^\top x
+C_k^\top u +(p_k+q_k)^{1/2} w, \\
  && z_k= \left[\begin{array}{c} \widetilde{B}_k^\top \\ 0
    \end{array}\right]+\left[\begin{array}{c} 0\\ E_k^{-1/2}
    \end{array}\right]u \nonumber
  \end{eqnarray}
and the $\mathcal{H}_{\infty}$ performance criterion
\begin{eqnarray}
  \label{VU.Hinf}
  \int_0^\infty \|z_k\|^2dt< \frac{1}{\beta}\int_0^\infty \|w\|^2dt \quad
  \forall w\in L_2, \quad (x(0)=0),
\end{eqnarray}
has a solution. Indeed, it follows from (\ref{Ric.ineq.3}) that
\begin{eqnarray}
  \label{Ric.ineq.4}
\lefteqn{  (\widetilde{A}_k-\frac{p_k^2+\tau_k^2\rho}{2\tau_k}I-X_k^{-1} C_k^\top
  E_k^{-1} C_k)X_k^{-1}} && \nonumber \\ && +
  X_k^{-1}(\widetilde{A}_k-\frac{p_k^2+\tau_k^2\rho}{2\tau_k}I-X_k^{-1} C_k^\top E_k^{-1} C_k)^\top  \nonumber
  \\
&& + X_k^{-1}(C_k^\top E_k^{-1} C_k +  \beta(p_k+q_k)I)X_k^{-1}
 +  \widetilde{B}_k \widetilde{B}_k^\top <0.  
\end{eqnarray}
Since $X_k^{-1}>0$ and (\ref{Ric.ineq.4}) is a strict inequality,
the matrix  
\begin{equation}
\label{Hurw}
\tilde
A-\frac{p_k^2+\tau_k^2\rho}{2\tau_k}I-C_k^\top E_k^{-1} C_kX_k^{-1}
\end{equation}
is Hurwitz. Thus, the
closed loop system consisting of the system (\ref{VU.sys}) with $w=0$ and the
state-feedback controller 
\[
u=-E_k^{-1} C_kX_k^{-1}x
\]
is exponentially stable. Also, 
using the completion of squares, it is easy to show from  (\ref{Ric.ineq.3})
that the above controller guarantees the $\mathcal{H}_{\infty}$ attenuation property
(\ref{VU.Hinf}). Since the pair $(\widetilde{A}_k,\widetilde{B}_k)$ is
controllable, these observations guarantee that the Riccati equation
(\ref{Ric.eq}) with $\gamma=\frac{1}{\beta}$ has a unique nonnegative
definite stabilizing solution $Z_k$ (e.g., see
\cite[Theorem~3.2.2]{PUSB}). Thus,
$\frac{1}{\beta}>\gamma^\circ$. Furthermore, since
$(\widetilde{A}_k,\widetilde{B}_k)$ is assumed to be controllable, $Z_k>0$
and is invertible. 

From Theorem 4.8 in \cite{Basar-Bernhard}, we know that $\gamma^\circ>0$.
These observations imply that $\beta<{\gamma^\circ}^{-1}$. Also, using the
relationship between solutions to the Riccati equation (\ref{Ric.eq}) and
the corresponding Riccati inequality
(\ref{Ric.ineq.3})~\cite[Lemma~8.1]{GA-1994}, it follows that
$X_k<Z_k^{-1}$. 

This discussion leads us to conclude that there exist upper bounds on
feasible $\beta$ and $\|X_k\|$. Indeed, $\gamma^\circ$ and $Z_k$ are defined
using the conditions involving the properties of the matrices
$\widetilde{A}_k$, $C_k$ and $B_k$ and the constants $\rho$, $p_k$. Hence,
these constant and the matrix are not dependent on the choice of the
feasible $\beta$ and $X_k$. 

It remains to show that there is an upper bound on the feasible $F_k$ as
well. Using the Schur complement, (\ref{LMI:rest_feasible_set}) is equivalent to 
$  F_k^\top X_k^{-1} F_k<\rho X_k$.
This further implies
\begin{align*}
{X_k^{-1/2}} F_k^\top X_k^{-1} F_k {X_k^{-1/2}} &< \rho I \\
tr \left({X_k^{-1/2}} F_k^\top X_k^{-1} F_k {X_k^{-1/2}} \right) &< n \rho \\
\| {X_k^{-1/2}} F_k {X_k^{-1/2}} \| &< \sqrt{n\rho}.
\end{align*}
For the Frobenius-norm of $F_k$, we can now conclude
\begin{align*}
\|F_k \| =& \| {X_k^{1/2}} {X_k^{-1/2}} F_k {X_k^{-1/2}} {X_k^{1/2}} \|  \\
\leq& \| {X_k^{1/2}} \| \| {X_k^{-1/2}} F_k {X_k^{-1/2}}\|  \| {X_k^{1/2}} \| \\
<& \sqrt{n\rho} \| {X_k^{1/2}} \|^2,
\end{align*}
which is bounded due to boundedness of $X_k$.
\end{proof}

\end{Theorem}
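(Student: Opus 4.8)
The plan is to fix an arbitrary feasible tuple $(\beta, X_k, F_k)\in\Omega$ and to extract upper bounds on $\beta$, $\|X_k\|$, and $\|F_k\|$ that depend only on the system data $\widetilde{A}_k$, $C_k$, $\widetilde{B}_k$, $E_k$ and the fixed constants $\rho$, $\alpha_k$, $p_k$, $q_k$, never on the feasible point itself. Since every $X_k$ is positive definite and $\beta>0$, the lower bounds are automatic, so all the work lies in the upper bounds.

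First I would eliminate the slack variable $F_k$ by merging the two relevant constraints. Applying the Schur complement to \eqref{LMI:stability} yields the quadratic matrix inequality $Q_k-p_kF_k-p_kF_k^\top+X_k\widetilde{B}_k\widetilde{B}_k^\top X_k<0$, while \eqref{LMI:rest_feasible_set} is equivalent to $F_k^\top X_k^{-1}F_k-\rho X_k<0$. For any $\tau_k>0$, adding $\tau_k$ times the latter to the former preserves strict negative definiteness, and completing the square in $F_k$ removes the explicit $F_k$-dependence. This leaves a Riccati inequality involving only $X_k$ and $\beta$, in which the drift matrix $\widetilde{A}_k$ is shifted to $\widetilde{A}_k-\frac{p_k^2+\tau_k^2\rho}{2\tau_k}I$.

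The central step is to reinterpret this Riccati inequality through $\mathcal{H}_\infty$ control theory. Pre- and post-multiplying by $X_k^{-1}$ turns it into a Riccati inequality for $X_k^{-1}$ that I would recognize as a sufficient condition for solvability of an auxiliary state-feedback $\mathcal{H}_\infty$ problem with attenuation level $1/\beta$. Introducing the associated Riccati equation and its critical level $\gamma^\circ$, the existence of a strict inequality solution forces $1/\beta>\gamma^\circ>0$, so $\beta<1/\gamma^\circ$ is bounded above. The monotone relation between solutions of the Riccati inequality and the stabilizing solution $Z_k$ of the Riccati equation then gives $X_k<Z_k^{-1}$, and since $Z_k$ is fixed by the data alone, $\|X_k\|$ is bounded uniformly over $\Omega$.

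The hard part will be that the theorem deliberately does not assume $(\widetilde{A}_k,C_k)$ detectable, whereas the standard $\mathcal{H}_\infty$ existence results require detectability to guarantee $\gamma^\circ<\infty$. I would circumvent this using the shift itself: the quantity $\frac{p_k^2+\tau_k^2\rho}{2\tau_k}$ tends to infinity both as $\tau_k\to0^+$ and as $\tau_k\to\infty$, so I can select $\tau_k$ large enough that the shift exceeds the real parts of all unobservable modes of $(\widetilde{A}_k,C_k)$, making the shifted pair detectable; controllability of $(\widetilde{A}_k,\widetilde{B}_k)$ simultaneously delivers the required stabilizability. Once $\beta$ and $X_k$ are bounded, bounding $F_k$ is routine: from $F_k^\top X_k^{-1}F_k<\rho X_k$ a trace estimate gives $\|X_k^{-1/2}F_kX_k^{-1/2}\|<\sqrt{n\rho}$, and hence $\|F_k\|<\sqrt{n\rho}\,\|X_k^{1/2}\|^2$, which is bounded because $X_k$ is.
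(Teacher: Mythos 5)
Your proposal is correct and follows essentially the same route as the paper's proof: merging \eqref{LMI:stability} and \eqref{LMI:rest_feasible_set} with a multiplier $\tau_k$, completing the square in $F_k$ to obtain a shifted Riccati inequality, invoking the $\mathcal{H}_\infty$ Riccati-equation theory to bound $\beta$ by $1/\gamma^\circ$ and $X_k$ by $Z_k^{-1}$, choosing $\tau_k$ to recover detectability of the shifted pair, and finishing with the same trace estimate for $\|F_k\|$. No gaps.
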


\medskip

Now, 
the decoupled version of the LMI conditions \eqref{LMI:stability},
\eqref{LMI:couplings}, \eqref{LMI:rest_feasible_set} is proposed as
\begin{align}
\begin{bmatrix}
Q_k^k-p_k F_k - p_k F_k^\top  &  X_k^k \widetilde{B}_k \\
*  & -I
\end{bmatrix}  \leq  - &\begin{bmatrix}
\delta X_k^k & 0 \\ 0 & 0
\end{bmatrix} \label{LMI:stability_sep} \\
\hspace{-0.5cm}\begin{bmatrix}
-\frac{2 \alpha_k}{q_k +1} X_k^k & -\beta^k I + F_k & \hdots & -\beta^k I + F_k \\
* & -\frac{2 \alpha_{j_1^k}}{q_{j_1^k} +1} X_{j_1^k}^k & \hdots & 0 \\
\vdots & \vdots & \ddots & \vdots \\
* & 0 & \hdots & -\frac{2 \alpha_{j_{p_k}}}{q_{j_{p_k}} +1} X_{j_{p_k}}^k
\end{bmatrix} & \leq -\delta \; I \label{LMI:couplings_sep} \\
\begin{bmatrix}
- \rho X_k^k & -F_k^\top \\
-F_k & -X_k^k
\end{bmatrix} & \leq 0, \label{LMI:rest_feasible_set_sep}
\end{align}
with
$ Q_k^k = X_k^k \widetilde{A}_k + \widetilde{A}_k^\top X_k^k - C_k^\top E_k^{-1} C_k + \beta^k(p_k+q_k)I$.
Note that the LMI-conditions are formulated as non-strict inequalities, but with additional parameter $\delta>0$. 
However, as $\delta$ can be chosen arbitrarily small, it introduces no conservativeness. 


We denote the feasible set of the $k$-th group of the LMIs as 
$\Omega_k = \{Y_k \;\; | \;\; \eqref{LMI:stability_sep}, \eqref{LMI:couplings_sep}, \eqref{LMI:rest_feasible_set_sep} \text{ hold true}\}$.
Then, the separable convex program can be written as 
\begin{equation}\label{prob:separable_optimization}
\begin{aligned}
 \text{minimize} & (-\sum_{k=1}^N \beta^k) \\
 \text{subject to }& Y_k \in \Omega_k, \quad
  \beta^k = \widetilde{\beta} \\
 & X_k^k = \widetilde{X}_k, \quad
  X_{j_1^k}^k = \widetilde{X}_{j_1^k}, \quad
 ... \quad,
  X_{j_{p_k}^k}^k = \widetilde{X}_{j_{p_k}^k}
\end{aligned}
\end{equation}
for every $k=1,...,N$. Here, $\widetilde{\beta}, \widetilde{X}_j, j=1,...,N$  are additional variables that are needed to make the problem separable.

\medskip

\begin{Remark}
The optimization problem \eqref{prob:separable_optimization} can be varied in the way that for a given performance parameter $\beta >0$, filter gains for \eqref{sys:estimator} are to be found. Then, \eqref{prob:separable_optimization} turns to a pure feasibility problem without optimization objective, and therefore, the variables $\beta^k, \widetilde \beta$ and their iterations in the following algorithm can be omitted.
\end{Remark}

\medskip

The dual problem has the form
\begin{equation}
\begin{aligned}\label{prob:dual_optimization}
&\text{maximize } \textbf{q}(\widetilde{\Lambda}^1,...,\widetilde{\Lambda}^N) \\
\end{aligned}
\end{equation}
where $\widetilde{\Lambda}^k=( \lambda^k, \Lambda^k_k, \Lambda^k_{j_1^k},..., \Lambda^k_{j_{p_k}^k})$ for $k=1,...,N$
is the suitable tuple of Lagrange multipliers and the dual function $\textbf{q}(\cdot)$ is defined as
\begin{equation}
\textbf{q}(\widetilde{\Lambda}^1,...,\widetilde{\Lambda}^N) =  \inf_{Y_k \in \Omega_k, k=1,...,N} L(Y_1,...,Y_N, \widetilde{\Lambda}^1,...,\widetilde{\Lambda}^N).
\end{equation}
$L(\cdot)$ is the augmented Lagrangian function (cf. \cite{Bertsekas1989})
\begin{equation}\label{eq:Lagrangian_total}
\begin{aligned}
L(Y_k, \widetilde{\Lambda}^k) =& \sum_{k=1}^N \left( -\beta^k + \lambda^k (\widetilde\beta-\beta^k) + \frac{c}{2} | \widetilde\beta-\beta^k |^2 \right) \\
 &+ \sum_{k=1}^N\sum_{j \in \mathcal{N}_k \cup k} \!\!\! \left( tr\left(\Lambda^{k \top}_{j} (\widetilde X_j - X_j^k) \right) + \frac{c}{2}\| \widetilde X_j - X_j^k \|^2 \right)
\end{aligned}
\end{equation}
with design parameter $c>0$.
The optimization problem \eqref{prob:dual_optimization} can now be solved iteratively
with Algorithm \ref{alg:iteration}, which is initialized with $Y_k(0) \in \Omega_k$, $\lambda>0$ and symmetric $\Lambda_j^k>0$.

\begin{algorithm}\label{alg:iteration}
\textbf{Algorithm 1: Calculation of iteration step t+1} \\
\begin{enumerate}
\item 
Set the fusion variables for $j=1,...,N$
\begin{align*}
\widetilde\beta(t+1) &= \frac{1}{N}\sum_{k=1}^N \beta^k(t) - \frac{1}{N c}\sum_{k=1}^N \lambda^k(t) \\
\widetilde X_j(t+1) &= \frac{1}{q_j}\sum_{k \in \mathcal{M}_j} X_j^k(t) - \frac{1}{q_j c}\sum_{k \in \mathcal{M}_j} \Lambda_j^k(t)
\end{align*}
\item 
Calculate the new variables for $k=1,...,N$ 
\begin{align*}
&Y_k(t+1) \\
=& \text{arg}\!\!\! \min_{Y_k \in \Omega_k} \!\! \left( \!-\beta^k \!-\! \lambda^k(t) \beta^k + \frac{c}{2} | \widetilde \beta(t+1) \!-\! \beta^k|^2 \right. \\
&\left. + \!\!\!\! \sum_{j \in \mathcal{N}_k \cup k} \!\!\!\! \left( \!-tr\left(\Lambda^{k \top}_{j}\!\!(t) X_j^k \right) \! + \! \frac{c}{2}\| \widetilde X_j(t+1) \!-\! X_j^k \|^2 \right) \!\! \right)
\end{align*}
\item 
Set the Lagrange variables for $k=1,...,N$ 
\begin{align*}
\lambda^k(t+1)&=\lambda^k(t)+c(\widetilde \beta(t+1)-\beta^k(t+1))
\end{align*}
and for all $k=1,...,N, j \in \mathcal{N}_k$
\begin{align*}
\Lambda^k_j(t+1) &= \Lambda^k_j(t)+c(\widetilde X_j(t+1)-X_j^k(t+1))
\end{align*}
\end{enumerate}

\end{algorithm}

\medskip

\begin{Remark}
Out of the three steps in Algorithm 1, clearly 2) and 3) can be run in parallel by the individual estimators separately. 
Calculation of Step 1 of Algorithm 1 requires the evaluation of the mean value, which can be done in a distributed manner by applying a consensus algorithm. Under Assumption 1, average consensus algorithms can be used to calculate $\widetilde\beta(t+1)$. In particular, discrete time algorithms are preferable to keep the concept of an iterative algorithm \cite{Zhu2010} and algorithm which converge in finite-time are useful to ensure exact convergence \cite{Sundaram2007, Chen2011}. 

For the calculation of $\widetilde X_k(t+1)$ in the case of undirected graphs, only two steps are needed: All neighbors $j \in \mathcal{M}_k$ pass their $X_k^j(t)$ and $\Lambda^j_k(t)$ to estimator $k$. Then, estimator $k$ calculates $\widetilde X_k(t+1)$ and passes it back to its neighbors.
The calculation of $\widetilde X_k(t+1)$ in the case of directed graphs is more demanding with respect to the graph topology: Usual average consensus algorithms can be applied when for every $k=1,...,N$, the subgraph $\widetilde{\mathcal{G}}_k$ induced by node $k$ and its out-neighborhood $\mathcal{M}_k$, is a balanced graph. This however can be relaxed by adding additional variables $X_j^k, j \not\in \mathcal{N}_k$, to $Y_k$ and adding $X_j^k = \widetilde X_j$ as equality constraint. For instance, if for all $k=1,...,N$
$ Y_k = (F_k, \beta^k, X_1^k,...,X_N^k ) $, then $\widetilde X_k(t+1), k=1,...,N$ can be
calculated under Assumption 1 using average consensus. This will later be demonstrated in the numerical example.
\end{Remark}

\medskip

In order to show the convergence of Algorithm 1, two lemmas need to be introduced.

\medskip

\begin{Lemma}\label{Lemma:vectorization}
The Lagrangian \eqref{eq:Lagrangian_total} can be written in terms of the vectorized variables, i.e.
\begin{equation}\label{eq:Lagrangian_total_vector}
\begin{aligned}
&L(Y_k, \widetilde{\Lambda}^k) = \sum_{k=1}^N \left( -\beta^k + \lambda^k (\widetilde \beta-\beta^k) + \frac{c}{2} | \widetilde \beta-\beta^k |^2 \right) \\
 &+ \sum_{k=1}^N\sum_{j \in \mathcal{N}_k \cup k} \left( vec(\Lambda^{k}_j)^\top vec(\widetilde X_j - X_j^k) + \frac{c}{2}\| vec(\widetilde X_j - X_j^k) \|^2 \right)
\end{aligned}
\end{equation}

\begin{proof}
We have the equalities
\begin{align*}
tr(A^\top B) &=\sum_i\sum_j A_{ji}B_{ji} = vec(A)^\top vec(B)  \\
\|A\|^2 &=tr(A^\top A)=\sum_i\sum_j A_{ji}^2 = \| vec(A) \|^2 .
\end{align*}

\end{proof}
\end{Lemma}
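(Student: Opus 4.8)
The plan is to observe that the two expressions for $L$ differ only in their matrix-valued coupling terms: the scalar contributions involving $\beta^k$, $\widetilde\beta$, and $\lambda^k$ are literally identical in \eqref{eq:Lagrangian_total} and \eqref{eq:Lagrangian_total_vector}, so they carry over unchanged and require no manipulation. I would therefore restrict attention to the inner double sum over $k=1,\dots,N$ and $j \in \mathcal{N}_k \cup k$, and show that each summand is preserved under vectorization.

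First I would treat the linear term. Expanding the trace entrywise gives $tr(A^\top B) = \sum_i \sum_j A_{ji} B_{ji}$, and since $vec(\cdot)$ simply stacks the columns of a matrix, the right-hand side is exactly the Euclidean inner product $vec(A)^\top vec(B)$. Applying this identity with $A = \Lambda^k_j$ and $B = \widetilde X_j - X_j^k$ converts each trace term $tr(\Lambda^{k\top}_j(\widetilde X_j - X_j^k))$ into $vec(\Lambda^k_j)^\top vec(\widetilde X_j - X_j^k)$.

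Next I would treat the quadratic penalty term. By definition of the Frobenius norm, $\|A\|^2 = tr(A^\top A) = \sum_i \sum_j A_{ji}^2$, which is precisely $\|vec(A)\|^2$, the squared Euclidean norm of the stacked columns. Applying this with $A = \widetilde X_j - X_j^k$ rewrites each penalty $\frac{c}{2}\|\widetilde X_j - X_j^k\|^2$ as $\frac{c}{2}\|vec(\widetilde X_j - X_j^k)\|^2$. Substituting both rewritten terms back into the double sum, and recombining with the untouched scalar terms, produces exactly \eqref{eq:Lagrangian_total_vector}. There is no real obstacle here: the statement reduces entirely to the two elementary bookkeeping identities above, and the only point worth verifying is that $vec$ is applied consistently to the same matrix difference $\widetilde X_j - X_j^k$ in both the linear and quadratic terms, so that the index structure of the double sum is preserved.
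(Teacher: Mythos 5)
Your proof is correct and follows essentially the same route as the paper: both reduce the claim to the two entrywise identities $tr(A^\top B)=vec(A)^\top vec(B)$ and $\|A\|^2=\|vec(A)\|^2$ and apply them term by term to the trace and Frobenius-norm penalties. Your write-up just spells out the substitution $A=\Lambda^k_j$, $B=\widetilde X_j - X_j^k$ a little more explicitly than the paper does.
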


This Lemma shows, that we can recast the problem into a problem of a standard form defined on a finite dimensional vector space.

\medskip

\begin{Lemma}\label{Lemma:boundedness_local_solution}
For fixed $\widetilde \beta, \widetilde X_k, \widetilde X_{j_1^k},..., \widetilde X_{j_{p_k}^k}, \lambda^k, \Lambda^k_{j_1^k},..., \Lambda^k_{p_k}$, the minimization
\begin{equation} \label{eq:local_Lagrangian}
\begin{aligned}
\text{arg} \min_{Y_k \in \Omega_k} & \left( -\beta^k - \lambda^k \beta^k + \frac{c}{2} | \widetilde \beta-\beta^k |^2 \right. \\
& \left. + \sum_{j \in \mathcal{N}_k \cup k}  \left( -tr\left(\Lambda^{k \top}_{j} X_j^k \right) + \frac{c}{2}\| \widetilde X_j - X_j^k \|^2 \right)\right)
\end{aligned}
\end{equation}
is always attainable.
\end{Lemma}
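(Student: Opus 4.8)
The plan is to invoke the Weierstrass extreme value theorem. For this it suffices to show that the feasible set $\Omega_k$ is nonempty and closed and that the objective in \eqref{eq:local_Lagrangian}, regarded as a function of $Y_k$, is continuous and coercive; the minimum is then attained on the intersection of $\Omega_k$ with a sufficiently large closed ball. Nonemptiness is guaranteed because Algorithm~1 is initialized with $Y_k(0)\in\Omega_k$, and continuity is immediate since, by Lemma~\ref{Lemma:vectorization}, the objective is a sum of affine and quadratic terms in the (vectorized) entries of $Y_k$.

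To verify closedness, note first that the defining inequalities \eqref{LMI:stability_sep}--\eqref{LMI:rest_feasible_set_sep} are non-strict and depend continuously (indeed affinely) on $Y_k$, so each defines a closed set. The only potentially open constraint is the positive definiteness of the matrix components of $Y_k$. However, since the coupling LMI \eqref{LMI:couplings_sep} is required to be bounded above by $-\delta I$, each of its diagonal blocks satisfies $-\tfrac{2\alpha_j}{q_j+1}X_j^k \preceq -\delta I$, whence $X_j^k \succeq \tfrac{(q_j+1)\delta}{2\alpha_j} I \succ 0$ uniformly over $\Omega_k$. Thus positive definiteness is preserved under limits, and $\Omega_k$ is closed.

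It remains to establish coercivity, i.e. that the objective tends to $+\infty$ along every sequence in $\Omega_k$ whose norm diverges. For the components $\beta^k$ and $X_j^k$, $j\in\mathcal{N}_k\cup k$, the augmented-Lagrangian penalties $\tfrac{c}{2}|\widetilde\beta-\beta^k|^2$ and $\tfrac{c}{2}\|\widetilde X_j-X_j^k\|^2$ are quadratic and therefore dominate the accompanying affine terms $-(1+\lambda^k)\beta^k$ and $-tr(\Lambda_j^{k\top}X_j^k)$; hence the objective diverges whenever $|\beta^k|\to\infty$ or $\|X_j^k\|\to\infty$. The remaining variable $F_k$ does not appear in the objective, so its contribution to coercivity must be recovered from feasibility. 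Here I reuse the final estimate in the proof of the Theorem: applying the Schur complement to \eqref{LMI:rest_feasible_set_sep} gives $F_k^\top (X_k^k)^{-1} F_k \preceq \rho X_k^k$, and hence $\|F_k\|\leq\sqrt{n\rho}\,\|(X_k^k)^{1/2}\|^2$. Consequently $\|F_k\|\to\infty$ forces $\|X_k^k\|\to\infty$, which already drives the objective to $+\infty$ through its penalty term. Combining the cases yields coercivity.

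The main obstacle is that $\Omega_k$ is itself unbounded, so boundedness of a minimizing sequence cannot be read off from the constraints alone: the coupling LMI \eqref{LMI:couplings_sep} bounds the neighbor blocks $X_{j_i^k}^k$ only from below, and $F_k$ is absent from the objective. The resolution is that coercivity is supplied jointly by the quadratic penalties, for $\beta^k$ and the $X_j^k$, and by the feasibility bound on $F_k$ in terms of $X_k^k$ inherited from the Theorem. Once coercivity and closedness are established, attainability of the minimizer follows by the standard argument.
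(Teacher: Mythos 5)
Your proof is correct and follows essentially the same route as the paper's: both establish closedness of $\Omega_k$ from the non-strict LMIs together with the induced lower bound $X_j^k\succeq\overline{\delta}\,I$, control $F_k$ through the Schur complement of \eqref{LMI:rest_feasible_set_sep}, and let the quadratic augmented-Lagrangian penalties supply compactness of the relevant sub-level sets (your ``coercivity'') before invoking the Weierstrass argument. The only difference is minor: the paper reuses the boundedness argument of Theorem~1 to bound $\beta^k$ and $X_k^k$ directly from the constraints, whereas you let the quadratic penalties absorb those variables and need only the elementary estimate $\|F_k\|\leq\sqrt{n\rho}\,\|(X_k^k)^{1/2}\|^2$, a slight and harmless simplification.
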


\begin{proof} 
First, note that the LMI conditions \eqref{LMI:stability_sep}-\eqref{LMI:rest_feasible_set_sep} are non-strict inequalities. The definition range of the solution matrices $X_k^k >0, X_j^k>0, j\in\mathcal{N}_k$ are strict inequalities, but \eqref{LMI:stability_sep}-\eqref{LMI:rest_feasible_set_sep} imply that there exists a $\overline{\delta}>0$ such that $X_k^k \geq \overline{\delta} \; I$ and $X_j^k \geq \overline{\delta} \; I$ for $j \in \mathcal{N}_k$. Thus, the feasible set $\Omega_k$ is closed and convex.


Following again the proof of Theorem 1, \eqref{LMI:stability_sep} and \eqref{LMI:rest_feasible_set_sep} imply that $F_k, \beta^k, X_k^k$ are bounded for all $k=1,...,N$. In constrast, the variables $X_j^k$ for $ j\in \mathcal{N}_k$ are not restricted to a bounded set by the LMIs \eqref{LMI:stability_sep}-\eqref{LMI:rest_feasible_set_sep}. 
However, note that the cost function of \eqref{eq:local_Lagrangian} is quadratic in the variables $X^k_j, j \in \mathcal{N}_k$. Thus, due to the boundedness of $F_k, \beta^k, X_k^k$, we conclude that the sub-level sets of \eqref{eq:local_Lagrangian}
\begin{equation}\label{eq:sublevelset}
\begin{aligned}
& \left\{ \left. Y_k \in \Omega_k | \;  -\beta^k - \lambda^k \beta^k + \frac{c}{2} | \widetilde \beta-\beta^k |^2 \right. \right. \\
& \left. + \sum_{j \in \mathcal{N}_k \cup k}  \left( -tr\left(\Lambda^{k \top}_{j} X_j^k \right) + \frac{c}{2}\| \widetilde X_j - X_j^k \|^2 \right) < \overline{c} \right\}
\end{aligned}
\end{equation}
for $\overline{c} \in \mathbb{R}$ are bounded.
Following the argument in Proposition 4.1 in \cite{Bertsekas1989}, Chapter 3, we can conclude that we can equivalently search for the minimum of the cost function over a non-empty sub-level set \eqref{eq:sublevelset} instead of $\Omega_k$. 
Therefore, we can conclude that \eqref{eq:local_Lagrangian} is always attainable.

 
\end{proof}

\medskip

\begin{Theorem}
Algorithm~1 is a solution to Problem 1. In particular, the iteration steps $Y_k(t), k=1,...,N,$ can be calculated in parallel, and satisfy the convergence conditions \eqref{eq:beta-convergence}, \eqref{eq:matrix-convergence}.
\end{Theorem}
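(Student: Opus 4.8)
The plan is to recognize Algorithm~1 as an instance of the alternating-direction method of multipliers (equivalently, the method of multipliers with a block-separable structure) applied to the consensus-constrained program \eqref{prob:separable_optimization}, and then to invoke the convergence theory of Section~3 in \cite{Bertsekas1989} (see also \cite{Boyd2010a}). First I would use Lemma~\ref{Lemma:vectorization} to replace every matrix variable by its vectorization, so that \eqref{prob:separable_optimization} becomes a standard convex program over a finite-dimensional Euclidean space: a linear objective $-\sum_k \beta^k$, closed convex local constraint sets $\Omega_k$ (closed and convex by Lemma~\ref{Lemma:boundedness_local_solution}), and \emph{affine} consensus constraints $\beta^k=\widetilde\beta$, $X_j^k=\widetilde X_j$. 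Grouping the unknowns into a local block $(Y_1,\dots,Y_N)$ and a fusion block $(\widetilde\beta,\widetilde X_1,\dots,\widetilde X_N)$ then exhibits the two-block structure required by the method.

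Next I would verify that the three steps of Algorithm~1 are exactly the method-of-multipliers updates for the augmented Lagrangian \eqref{eq:Lagrangian_total}. Step~1 is the unconstrained minimization of \eqref{eq:Lagrangian_total} over the fusion block: since those terms are quadratic in $\widetilde\beta$ and each $\widetilde X_j$, setting the respective gradients to zero yields precisely the averaging formulas. Step~2 is the minimization over the local block; because both the objective and the constraints $Y_k\in\Omega_k$ are separable across $k$, it splits into $N$ independent subproblems, each solvable by estimator~$k$ using only $\widetilde\beta(t+1)$, $\{\widetilde X_j(t+1)\}_{j\in\mathcal N_k\cup k}$ and its own multipliers — this establishes the parallel-computation claim. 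Lemma~\ref{Lemma:boundedness_local_solution} guarantees each subproblem attains its minimum, so $Y_k(t+1)\in\Omega_k$ and every iterate satisfies the relaxed LMIs \eqref{LMI:stability_sep}--\eqref{LMI:rest_feasible_set_sep} (hence, for arbitrarily small $\delta$, the original conditions), which is the feasibility requirement of Problem~1. Step~3 is the standard dual ascent with step size $c$. For the distributed evaluation of Step~1 I would invoke Assumption~1 (connected and balanced $\mathcal G$) so that the required means are computable by an average-consensus routine, as detailed in the preceding remark.

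For the convergence claims \eqref{eq:beta-convergence}--\eqref{eq:matrix-convergence} I would then check the hypotheses of the convergence result. The primal problem is feasible (by the standing assumption $\Gamma\neq\emptyset$), and by Theorem~1 together with Lemma~\ref{Lemma:boundedness_local_solution} each feasible $\beta^k$ (and $X_k^k$, $F_k$) is bounded, so $-\sum_k\beta^k$ is bounded below on $\prod_k\Omega_k$ and, by closedness of the $\Omega_k$, a primal optimum is attained. Because the coupling constraints are affine, strong duality holds and a saddle point of the ordinary Lagrangian exists — exactly the hypothesis under which \cite[Section~3.4]{Bertsekas1989} guarantees that the iterates drive the primal residuals to zero and the multipliers to a limit. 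Residual convergence means $\beta^k(t)-\widetilde\beta(t)\to0$ and $X_j^k(t)-\widetilde X_j(t)\to0$ for every copy; subtracting any two such relations gives $\beta^{k_1}(t)-\beta^{k_2}(t)\to0$ and $X_j^{k_1}(t)-X_j^{k_2}(t)\to0$ for $k_1,k_2\in\mathcal M_j\cup j$, i.e.\ \eqref{eq:beta-convergence} and \eqref{eq:matrix-convergence}.

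The main obstacle I anticipate is verifying the saddle-point / strong-duality hypothesis, rather than the mechanical identification of the updates. The difficulty is that the objective is only linear in $\beta^k$ and the sets $\Omega_k$ are \emph{unbounded} in the neighbor-copy directions $X_j^k$, $j\in\mathcal N_k$, so attainment of a primal--dual optimum is not automatic. The resolution rests on Theorem~1 (which confines $\beta$, $X_k$, $F_k$ to a bounded set), on the affineness of the consensus constraints (which removes the need for a Slater condition), and on the quadratic penalty in \eqref{eq:Lagrangian_total} (which renders each local subproblem coercive, as already exploited in Lemma~\ref{Lemma:boundedness_local_solution}); assembling these into the precise form demanded by the cited convergence theorem is the crux of the argument.
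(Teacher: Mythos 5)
Your proposal is correct and follows essentially the same route as the paper: the paper's own proof is a two-line appeal to Lemmas~1 and~2 together with the convergence theory of Sections~3.3--3.4 of \cite{Bertsekas1989}, which is exactly the machinery you invoke (vectorization to reduce to a standard finite-dimensional convex program, identification of the three steps as the multiplier-method updates, attainability of the local subproblems, and residual convergence yielding \eqref{eq:beta-convergence}--\eqref{eq:matrix-convergence}). Your writeup is in fact considerably more explicit than the paper's, particularly in flagging and addressing the saddle-point/strong-duality hypothesis that the paper leaves implicit.
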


\begin{proof}
Using Lemma \ref{Lemma:vectorization} and \ref{Lemma:boundedness_local_solution}, we can follow the steps from \cite{Bertsekas1989}, Section 3.3 and 3.4, in order to prove convergence of the iterations. 

\end{proof}

\section{Numerical Example}
Like in \cite{Ugrinovskii2011a}, we consider a system of the form \eqref{eq:LTI}, with
\begin{equation*}
\begin{aligned}
A &= \begin{bmatrix}
0.3775 & 0& 0& 0& 0& 0 \\
0.2959 & 0.3510& 0& 0& 0& 0 \\
1.4751 & 0.6232& 1.0078& 0& 0& 0 \\
0.2340 & 0& 0& 0.5596& 0& 0 \\
0& 0& 0& 0.4437& 1.1878& −0.0215 \\
0& 0& 0& 0& 2.2023& 1.0039
\end{bmatrix}, \\
B &= \begin{bmatrix}
0.1 \; I_6 & 0
\end{bmatrix} \quad\quad \overline{D}_k = 0.01 \; I_2  \text{ for all } k=1,...,N
\end{aligned}
\end{equation*}
which is observed by six sensor nodes, sensing two coordinates each. For every sensor an estimator is implemented, where none of the estimators is able to estimate the complete state vector without communication. The communication topology is assumed to be a directed circulant graph and we use Algorithm 1 to calculate the filter gains. For the numerical calculations we use YALMIP \cite{Lofberg2004}. Since we are dealing with a directed but balanced graph, we apply the method described in Remark 4 and use complete local representations of all variables $X_j$ at every estimator $k$.
The algorithm is run with both fixed performance parameter $\beta^{const}=100$ as discussed in Remark 3, and also using optimization over the variable performance parameter $\beta^k(t), k=1,...,N,$ \eqref{prob:separable_optimization}, where $\beta^k(0)=100$ is set as  initialization.

In the first case, where $\beta$ is fixed, we evaluate the matrix convergence condition \eqref{eq:matrix-convergence} by calculating  the average value $X_j^{ave}=\frac{1}{q_j}\sum_{k \in \mathcal{M}_j \cup j} X^k_j  $
and subsequently
$\text{Error} = \sum_{j=1}^N \sum_{k = 1}^N \| X^k_j - X_j^{ave}\|^2$.

In the second case, involving optimisation over $\beta_k$, we additionally calculate $\beta^{ave}=\frac{1}{N}\sum_{k=1}^N \beta^k$
and subsequently we have
$\text{Error} = \sum_{j=1}^N \sum_{k = 1}^N \| X^k_j - X_j^{ave}\|^2 + \sum_{k = 1}^N | \beta^k - \beta^{ave}|^2$.

The plots of the error evolution are shown in Figure \ref{fig:fixed_beta} and \ref{fig:variable_beta}. Figure
\ref{fig:variable_beta} additionally shows the evolution of $\beta^{ave}$. The graph demonstrates that $\beta^{ave}$ is monotonically  increasing, and since it is bounded from above according to Theorem 1, it must eventually converge to a limit. In fact, it eventually converges to $2.3 \cdot 10^3$.


 \begin{figure}
  \centering
%
%
%
%
\begin{tikzpicture}[scale=0.9]

\begin{axis}[%
width=3in,
height=1.5in,
scale only axis,
xmin=0.5,
xmax=70.5,
xlabel={Iteration},
ymin=0,
ymax=50,
ylabel={Error}
]
\addplot [
color=blue,
only marks,
mark=*,
mark options={solid},
forget plot
]
table[row sep=crcr]{
1 47.1675204874294\\
2 12.2219068595352\\
3 7.76659969357437\\
4 13.1610943746415\\
5 13.0386886092203\\
6 8.91019171082914\\
7 6.12844650108794\\
8 4.30788782049918\\
9 3.8837520932762\\
10 3.57719862848702\\
11 3.16473255624788\\
12 2.90969085651121\\
13 2.76195632230267\\
14 2.54837148109725\\
15 2.27542333444472\\
16 1.97178364417098\\
17 1.70080663386874\\
18 1.5465340403081\\
19 1.46162327357446\\
20 1.40502013520312\\
21 1.34729169340615\\
22 1.28179379076706\\
23 1.21768111937641\\
24 1.16819933981131\\
25 1.14212578878581\\
26 1.14042174239792\\
27 1.15760946063356\\
28 1.18592968431011\\
29 1.2195342772008\\
30 1.25630922906328\\
31 1.24616054570133\\
32 1.23888919764846\\
33 1.21387610117458\\
34 1.17128909186928\\
35 1.11745264439127\\
36 1.06119066714838\\
37 1.01052421071453\\
38 0.969681606900516\\
39 0.938066892980075\\
40 0.911291079580615\\
41 0.883429817387642\\
42 0.849414562685127\\
43 0.807214301288349\\
44 0.757835703510637\\
45 0.703796256864894\\
46 0.652467687150293\\
47 0.553057854911856\\
48 0.458319443385305\\
49 0.416628260331813\\
50 0.406997415769057\\
51 0.419588747149699\\
52 0.440941740903935\\
53 0.35023768409037\\
54 0.345529346947959\\
55 0.346942164815517\\
56 0.341471581627327\\
57 0.333069279977993\\
58 0.325077733018393\\
59 0.319364152681972\\
60 0.316565768449267\\
61 0.316637144449978\\
62 0.319282525214597\\
63 0.32405508779634\\
64 0.330226162070331\\
65 0.336328364049205\\
66 0.268265511778113\\
67 0.0390196309037569\\
68 0.0421747740860581\\
69 0.0465890645580112\\
70 0.0530596362040954\\
};
\end{axis}
\end{tikzpicture}%
 	\caption{Evolution of the error during iteration for fixed $\beta$.}	
 	\label{fig:fixed_beta}
 \end{figure}
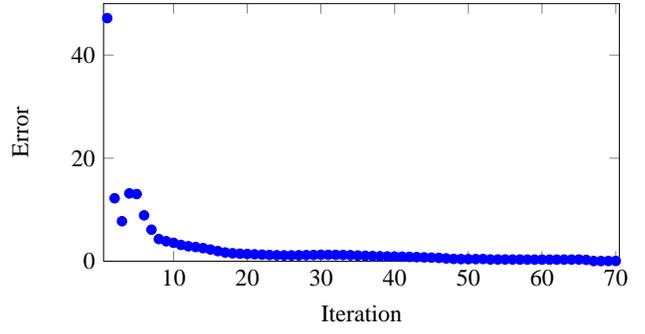
 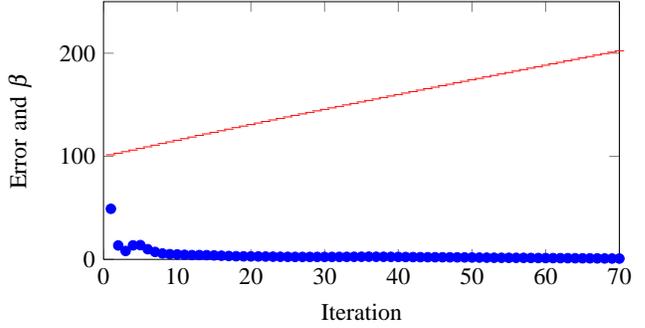
\begin{figure}
  \centering
%
%
%
%
\begin{tikzpicture}[scale=0.9]

\begin{axis}[%
width=3in,
height=1.5in,
scale only axis,
xmin=0,
xmax=70,
xlabel={Iteration},
ymin=0,
ymax=250,
ylabel={Error and $\beta$}
]
\addplot [
color=red,
only marks,
mark=-,
mark options={solid},
forget plot
]
table[row sep=crcr]{
1 101.498285126458\\
2 102.994831458464\\
3 104.494457974958\\
4 106.030181434215\\
5 107.598939224142\\
6 109.189135053169\\
7 110.790477132065\\
8 112.38732023638\\
9 113.970652411284\\
10 115.541585508216\\
11 117.102069159339\\
12 118.653543874904\\
13 120.197454080086\\
14 121.727748377579\\
15 123.248613564189\\
16 124.766217082627\\
17 126.278298889035\\
18 127.781830327618\\
19 129.280542553257\\
20 130.774058740066\\
21 132.262369685722\\
22 133.746125462127\\
23 135.225794445256\\
24 136.701617463198\\
25 138.17366228491\\
26 139.641874799461\\
27 141.106195831871\\
28 142.566631496986\\
29 144.023267881998\\
30 145.476374485755\\
31 146.926350858276\\
32 148.373676589788\\
33 149.818845805035\\
34 151.262335018497\\
35 152.704559285093\\
36 154.145740769821\\
37 155.586237019862\\
38 157.023839612479\\
39 158.459637798482\\
40 159.893745258397\\
41 161.326210975739\\
42 162.757123476233\\
43 164.186600116671\\
44 165.614781786027\\
45 167.041819270105\\
46 168.467839453811\\
47 169.892943868739\\
48 171.317186509078\\
49 172.740589379994\\
50 174.163135417902\\
51 175.584787095442\\
52 177.005484193831\\
53 178.425164059621\\
54 179.843768209167\\
55 181.261251778595\\
56 182.677595165243\\
57 184.092807976996\\
58 185.506810825991\\
59 186.918350556023\\
60 188.328360975602\\
61 189.737136153103\\
62 191.144919274771\\
63 192.551904119589\\
64 193.958207243919\\
65 195.363881229224\\
66 196.768735951032\\
67 198.172848257959\\
68 199.576226058898\\
69 200.978849505579\\
70 202.38054948703\\
};
\addplot [
color=blue,
only marks,
mark=*,
mark options={solid},
forget plot
]
table[row sep=crcr]{
1 49.0493572962513\\
2 13.487858868637\\
3 8.1326562913952\\
4 13.5779881593236\\
5 13.8448102548173\\
6 9.8938350398295\\
7 7.21966913008623\\
8 5.70794480489304\\
9 5.06748511857303\\
10 4.81752289009432\\
11 4.41485903676168\\
12 4.16392494987598\\
13 4.18191126336274\\
14 4.09896246795498\\
15 3.83983768251121\\
16 3.54406223239355\\
17 3.28589334350545\\
18 3.05973639833885\\
19 2.9534060717922\\
20 2.88835609022362\\
21 2.82150198430992\\
22 2.73503779850036\\
23 2.63324581789027\\
24 2.53323283847757\\
25 2.45336822862591\\
26 2.40237734962827\\
27 2.37809852202292\\
28 2.37179926597604\\
29 2.37360105743285\\
30 2.37878964189129\\
31 2.38870472515732\\
32 2.40760725860445\\
33 2.4388703321886\\
34 2.48287835958479\\
35 2.53740637413021\\
36 2.59966919085595\\
37 2.52454487825028\\
38 2.49875003423771\\
39 2.44324189609305\\
40 2.36742487734796\\
41 2.28506027979305\\
42 2.2098405046912\\
43 2.15034943562857\\
44 2.1079854472157\\
45 2.07699388883086\\
46 2.04756930355857\\
47 2.00999926432161\\
48 1.95817825250753\\
49 1.89123392986594\\
50 1.81329151147467\\
51 1.73180413886133\\
52 1.65529877596593\\
53 1.5910294850755\\
54 1.54320780001918\\
55 1.51222650197254\\
56 1.49497985637435\\
57 1.47248524454225\\
58 1.3774894370354\\
59 1.3210420724047\\
60 1.26737880922475\\
61 1.22344561139019\\
62 1.18631157990523\\
63 1.14514445486218\\
64 1.1016558459055\\
65 1.0517425660733\\
66 1.00367051530068\\
67 0.958117966908729\\
68 0.891413424350626\\
69 0.825262598947095\\
70 0.788123040693309\\
};
\end{axis}
\end{tikzpicture}%
 	\caption{Evolution of the error (blue dots) and performance index $\beta^{ave}$ (red line) in the algorithm involving optimisation over variables $\beta^k$.}	
 	\label{fig:variable_beta}
 \end{figure}
 
Better performance $\beta$ however is achieved at the expense of higher filter gains. For instance, after $70$ iterations, the consensus gain $K_1$ is 
\begin{align*}
\begin{bmatrix}
   21.1005  & -0.0256  &  0.0196  & -0.6018  &  0.0418 &   0.0117 \\
   -0.0215  & 73.3369  &  0.5599  &  0.0073  &  0.0025 &   0.0021 \\
   -0.0423  & -0.8806  & 99.8791  &  0.0617  &  0.0981 &   0.0536 \\
   -0.6033  & -0.0178  &  0.0618  & 70.6692  &  1.3701 &   2.8005 \\
    0.0415  & -0.0054  &  0.0972  &  1.7726  & 20.7775 &   5.0466 \\
    0.0117  & -0.0003  &  0.0554  &  2.7740  &  3.2740 &  17.7281 
\end{bmatrix}
\end{align*}
in the fixed-$\beta$ case and
\begin{align*}
\begin{bmatrix}
 28.8328  & -0.0572  &  0.0291  & -0.0089 &   0.0333  &  0.0655 \\
   -0.0397 &  99.9887  &  0.8962   & 0.0198  &  0.0050  &  0.0044 \\
   -0.0222 &  -0.8149 & 100.0003 &   0.2984 &   0.1414  &  0.0747 \\
   -1.7921  & -0.0044 &  0.3121 &   71.0412  &  1.4502   & 2.5260 \\
    0.0751&   -0.0035  &  0.1384  &  1.8242  & 27.2461 &   6.2623 \\
   -0.0177  &  0.0016  &  0.0711  &  2.4613  &  4.4543 &  23.0606
\end{bmatrix}
\end{align*}
in the variable-$\beta$ case.


\section{Conclusion}

We have developed a method for distributed filter design for cooperative $\mathcal{H}_\infty$-type estimation. In order to achieve this we separated the centralized problem by introducing additional variables and then applied an algorithm that works locally and only needs communication for average consensus.


\bibliographystyle{unsrt}
\bibliography{bibliography}

\end{document}